\documentclass[journal]{IEEEtran}
\IEEEoverridecommandlockouts
\usepackage[hidelinks,colorlinks,allcolors=blue]{hyperref}
\usepackage{cite}
\usepackage{amsmath,amssymb,amsfonts}
\usepackage{algorithmic}
\usepackage{graphicx}
\usepackage{textcomp}
\usepackage{xcolor}
\usepackage{amsfonts}
\usepackage{booktabs}
\usepackage{siunitx}
\usepackage{bbm}
\usepackage{multirow}
\usepackage{algorithm}
\usepackage{comment}
\usepackage{balance}
\usepackage[caption=false,font=footnotesize]{subfig}
\usepackage{amsthm}

\newtheorem{proposition}{Proposition}   
\newtheorem{remark}{Remark}

\usepackage{fancyhdr}

\fancypagestyle{firstpagecopyright}{%
  \fancyhf{} 
  \fancyfoot[C]{\parbox{\textwidth}{\footnotesize © 2026 IEEE. Personal use of this material is permitted. Permission from IEEE must be obtained for all other uses, in any current or future media, including reprinting/republishing this material for advertising or promotional purposes, creating new collective works, for resale or redistribution to servers or lists, or reuse of any copyrighted component of this work in other works.}}
}

\begin{document}
\markboth{Accepted to IEEE Signal Processing Letters}%
{Faghih Niresi \MakeLowercase{\textit{et al.}}}

\title{Graph Signal Separation with Learnable Spectral Filters\\

}
\author{Keivan Faghih Niresi, Dorina Thanou, and Olga Fink

\thanks{Keivan Faghih Niresi and Olga Fink are with Intelligent Maintenance and Operations Systems (IMOS) Laboratory, EPFL, Lausanne, Switzerland
(e-mail: keivan.faghihniresi@epfl.ch, olga.fink@epfl.ch).}
\thanks{Dorina Thanou is with Signal Processing Laboratory 4 (LTS4), EPFL, Lausanne, Switzerland
(e-mail: dorina.thanou@epfl.ch).}

}

\maketitle
\thispagestyle{firstpagecopyright}

\begin{abstract}
Separating multiple graph signals from a single observed mixture is an inherently ill-posed problem that traditionally relies on restrictive and handcrafted priors. This letter addresses this challenge by proposing an unsupervised learnable spectral filtering framework.  Our approach reconstructs latent components by passing a fixed random input through learnable spectral filters, operating within the low-frequency eigenspace of each source-specific graph Laplacian. The architecture implicitly biases the recovered signals toward smooth patterns by confining reconstruction to these low-frequency subspaces. This acts as a structural prior, establishing a principled bridge between classical graph spectral analysis and modern neural decomposition. Numerical experiments confirm that this framework successfully isolates individual sources using solely the observed mixture and the underlying graph topology.
\end{abstract}

\begin{IEEEkeywords}
Learnable spectral filtering, graph signal separation, graph signal processing, graph filters, source separation.
\end{IEEEkeywords}

\section{Introduction}

Blind source separation (BSS) is a fundamental problem in classical signal processing, aiming to recover latent and statistically independent source signals from their observed mixtures \cite{cardoso1998blind}. While classical BSS has found extensive applications across various domains \cite{musluoglu2025distributed, guo2024single}, it traditionally relies on a critical assumption: the availability of multiple mixtures. Typically, this requires the number of observations to be at least equal to the number of underlying sources \cite{comon2010handbook}. Furthermore, classical BSS is largely designed for signals sampled on regular domains. However, data is increasingly characterized by measurements that are inherently interconnected through irregular and relational structures, naturally represented by graphs. Graph signal processing (GSP) \cite{shuman2013emerging, ortega2018graph, ortega2022introduction, leus2023graph} accommodates such irregular domains by leveraging the graph Laplacian or related shift operators to define notions like smoothness, band-limitation, and filtering in the spectral domain. The success of GSP in numerous applications such as graph signal recovery \cite{torkamani2020statistical, chen2016signal}, graph signal sampling \cite{tanaka2020sampling}, and graph learning \cite{dong2019learning, javaheri2025time, niresi2024informed} has recently motivated the extension of BSS methodologies to graph-structured data. These emerging graph signal separation approaches can be broadly categorized according to the number of available observations.

The first category assumes access to multiple mixtures and extends classical BSS principles to graph signals. Representative examples include methods that jointly diagonalize graph auto‑covariance and higher‑order cumulant matrices under the assumption of known underlying graphs \cite{blochl2010second,miettinen2021graph,miettinen2020blind}, and a recent dynamic‑weight variant that adaptively balances graph structure and non‑Gaussianity \cite{zhang2025blind}. Several extensions further relax the assumption of known graph topologies by jointly estimating the graph structures and source signals from multiple observations \cite{einizade2021simultaneous, einizade2022unified}. Despite these advances, all such methods fundamentally require the availability of multiple mixed observations. The second category considers the more challenging single-mixture setting, where only one observation is available. Since classical statistical separation principles become insufficient in this regime, existing methods typically incorporate additional structural assumptions, graph priors, or generative models to regularize the recovery process \cite{iglesias2018demixing, garcia2020blind}. Nevertheless, recovering multiple graph signals from a single observation is a highly ill-posed problem, and without known graphs, jointly inferring both the topologies and the source signals from a single mixture is generally impossible without further strong assumptions.

A natural and prominent class of approaches for resolving this single-mixture separation relies on (possibly convex) regularization \cite{yarandi2023closed, mohammadi2023graph}. In these methods, each source is recovered by minimizing a data-fidelity term coupled with graph-smoothness penalties defined with respect to the corresponding graph Laplacians \cite{yarandi2024new, mohammadi2023graph}. Such formulations are advantageous due to their mathematical interpretability. However, their performance remains bound to the specific penalty form and careful weight tuning. By prescribing spectral characteristics a priori, these methods lack the adaptability required when latent components exhibit heterogeneous frequency content, or when the optimal spectral allocation cannot be adequately captured by predefined convex penalties.

This letter revisits graph signal separation from a spectral learning perspective and proposes an unsupervised spectral filtering framework to decompose a single noisy observation into multiple graph-specific components. Each source is modeled by a learnable spectral response acting on the low-frequency eigenspace of its associated graph Laplacian. Instead of directly optimizing source signals, the method learns graph-specific spectral coefficients modulating a fixed random latent input. This formulation offers two advantages. First, it requires no mixture-source training pairs, as all parameters are inferred directly from a single noisy observation. Second, it avoids black-box modeling by defining an explicit graph-spectral operator clearly interpreted via Laplacian eigenmodes. This preserves classical spectral graph processing interpretability while benefiting from the flexibility of learned spectral responses. Importantly, this formulation differs from prior graph-BSS approaches that rely on statistical independence, mutual information, or explicit models, and from graph separation methods that impose fixed spectral priors; instead, it learns spectral biases directly from the data.

The main contributions of this letter are threefold. First, we formulate single-observation graph signal separation within a spectral learning framework. Second, we propose a learnable spectral filter in which each component is represented by a low-dimensional learnable filter acting on the eigenbasis of its associated graph Laplacian and driven by a fixed random latent input. Third, we derive an unsupervised reconstruction objective that enables the direct estimation of the spectral coefficients from the observed mixture, enabling source decomposition without the need for paired supervision.

\section{Related Background}

\subsection{Graph Signal Separation}
Graph signal separation considers the recovery of multiple latent graph signals from an observed mixture when the constituent components exhibit distinct structural priors on different graphs. Let $\mathcal{G}_p = (\mathcal{V},\mathcal{E}_p)$ for $p=1,\dots,P$ denote a collection of graphs sharing a common vertex set $\mathcal{V}$ with $|\mathcal{V}|=N$ and having edge sets $\mathcal{E}_p$, and let $\mathbf{L}_p \in \mathbb{R}^{N \times N}$ be the Laplacian matrix associated with $\mathcal{G}_p$. The observed signal $\mathbf{m} \in \mathbb{R}^N$ is formed by an additive superposition of $P$ latent components:
\begin{equation} \label{eq:mixture}
\mathbf{m} = \sum_{p=1}^{P} \mathbf{x}_p^\star + \boldsymbol{\epsilon},
\end{equation}
where $\mathbf{x}_p^\star \in \mathbb{R}^N$ denotes the $p$-th latent graph signal and $\boldsymbol{\epsilon}$ is additive noise. A key assumption in graph signal separation is that each latent component is smooth with respect to its corresponding graph, meaning its energy is concentrated in the low-frequency spectral subspace of the corresponding Laplacian. To ensure identifiability, each component is constrained to have zero DC value, i.e., $\mathbf{1}^\top \mathbf{x}_p = 0$ for all $p$. A classical formulation that promotes this smoothness is given by \cite{mohammadi2023graph}:
\begin{equation} \label{eq:convex_reg}
\begin{aligned}
\min_{\{\mathbf{x}_p\}_{p=1}^{P}} \quad &
\frac{1}{2}\left\|\mathbf{m} - \sum_{p=1}^{P}\mathbf{x}_p\right\|_2^2
+
\sum_{p=1}^{P}\gamma_p\, \mathbf{x}_p^\top \mathbf{L}_p \mathbf{x}_p \\
\text{s.t.} \quad & \mathbf{1}^\top \mathbf{x}_p = 0,\ \text{for} \; p= 1,\dots,P
\end{aligned}
\end{equation}
where $\gamma_p > 0$ controls the strength of the smoothness prior for each component. While effective, this formulation imposes spectral responses that are fixed \textit{a priori} through the chosen regularization, thereby limiting expressivity when the optimal frequency allocation is unknown or varies across components.

\subsection{Spectral Graph Filtering}
Spectral methods transform graph signals by projecting them onto the eigenbasis of the graph Laplacian. For each graph $\mathcal{G}_p$, consider the eigendecomposition $\mathbf{L}_p = \mathbf{U}_p \boldsymbol{\Lambda}_p \mathbf{U}_p^\top$, where $\mathbf{U}_p$ is an orthonormal matrix of eigenvectors and $\boldsymbol{\Lambda}_p$ is a diagonal matrix containing the eigenvalues in increasing order. A spectral graph filter is defined by a frequency response function $h_p(\cdot)$ and acts on a signal $\mathbf{x}$ as \cite{isufi2024graph}:
\begin{equation}
\mathbf{y} = \mathbf{U}_p\, h_p(\boldsymbol{\Lambda}_p)\, \mathbf{U}_p^\top \mathbf{x},
\end{equation}
where $h_p(\boldsymbol{\Lambda}_p)$ is a diagonal matrix. This operator modulates graph frequencies individually. In signal separation, spectral filtering can be highly effective because it enables the isolation of sources based on their distinct low-frequency support across different graphs. To overcome the limitations of fixed transfer functions, learnable spectral filtering operators can adapt the frequency response directly from observed data.

\section{Proposed Method}

\subsection{Problem Formulation}
The proposed method addresses the recovery of the $P$ latent graph signals $\{\mathbf{x}_p^\star\}_{p=1}^{P}$ from the single noisy mixture $\mathbf{m}$ defined in Eq.~\eqref{eq:mixture}. The goal is to estimate a set of components $\{\hat{\mathbf{x}}_p\}_{p=1}^{P}$ such that $\mathbf{m} \approx \sum_{p=1}^{P}\hat{\mathbf{x}}_p$, constrained by the spectral structure of their corresponding graph Laplacians, which are known a priori. In contrast to supervised approaches, the method operates without paired training data. Instead, it directly learns graph spectral filters from a single observation, leveraging the underlying graph structure as an inductive bias.

\subsection{Learnable Spectral Filter and Optimization}
To enforce a low-dimensional subspace, we truncate the spectral representation. Let $\mathbf{U}_p^{(k_p)} \in \mathbb{R}^{N \times k_p}$ denote the matrix formed by the first \textit{non-constant} $k_p$ eigenvectors of $\mathbf{L}_p$ corresponding to the smallest non-zero eigenvalues. This truncation retains only the low‑frequency basis vectors where smooth graph signals concentrate their energy~\cite{shuman2013emerging}, acting as an implicit band‑limiting prior. A key component of the proposed architecture is a fixed and randomly initialized latent input vector $\mathbf{z} \in \mathbb{R}^N$. This vector remains unchanged during optimization and serves as a shared excitation signal across all components. The $p$-th learnable spectral branch maps this latent input into the corresponding estimated source component:
\begin{equation} \label{eq:branch_output}
\hat{\mathbf{x}}_p(\boldsymbol{\alpha}_p)
=
\mathbf{U}_p^{(k_p)}
\operatorname{diag}(\boldsymbol{\alpha}_p)
\left(\mathbf{U}_p^{(k_p)}\right)^\top \mathbf{z},
\quad p=1,\dots,P,
\end{equation}
where $\boldsymbol{\alpha}_p \in \mathbb{R}^{k_p}$ is a trainable vector of spectral coefficients. The projection $\left(\mathbf{U}_p^{(k_p)}\right)^\top \mathbf{z}$ extracts the low-frequency components of the latent input, which are subsequently reweighted by $\operatorname{diag}(\boldsymbol{\alpha}_p)$, and $\mathbf{U}_p^{(k_p)}$ maps the filtered signal back to the vertex domain. Since both the latent input $\mathbf{z}$ and the Laplacian eigenvectors are fixed, the only trainable parameters are the spectral coefficient vectors $\{\boldsymbol{\alpha}_p\}_{p=1}^{P}$. These parameters are learned by minimizing the reconstruction error of the mixture:
\begin{equation}
\{\hat{\boldsymbol{\alpha}}_p\}_{p=1}^{P}
=
\operatorname*{arg\,min}_{\{\boldsymbol{\alpha}_p\}_{p=1}^{P}}
\left\|
\mathbf{m}
-
\sum_{p=1}^{P}
\hat{\mathbf{x}}_p(\boldsymbol{\alpha}_p)
\right\|_2^2.
\end{equation}
Through this optimization, the mixture energy is distributed across the $P$ branches. As illustrated in Fig.~\ref{fig:snf}, each branch processes a shared latent input $\mathbf{z}$ by projecting it onto the low-frequency eigenspace of its respective graph. It then applies learnable spectral gains $\boldsymbol{\alpha}_p$ to reconstruct the corresponding source. All parameters can be updated using standard gradient-based methods.

\begin{figure} \centering \includegraphics[width=\linewidth]{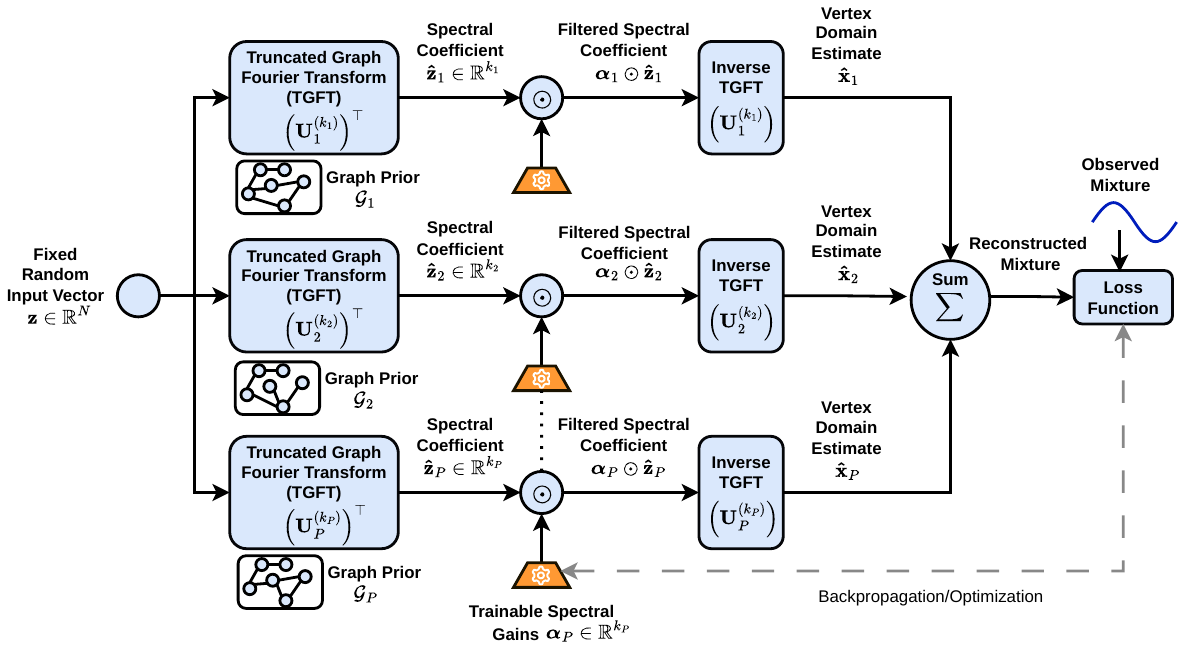} 
\caption{Illustration of the proposed learnable spectral filtering method for graph signal separation. The operator $\odot$ denotes the Hadamard product.} \label{fig:snf} \end{figure}

\subsection{Selection of Spectral Cutoff}

In the proposed framework, each graph signal is projected onto the first $k_p$ \textit{non-constant} Laplacian eigenvectors to capture its smooth components. The value of $k_p$ is determined adaptively for each graph $\mathcal{G}_p$: we select all eigenvectors with eigenvalues below a fraction $\lambda_{\text{ratio}}$ of the largest non-zero eigenvalue. This criterion ensures that the learned spectral filters focus on the low-frequency content of the signal, thereby reducing computational complexity and mitigating noise, under the assumption that the underlying graph signals are smooth and dominated by low-frequency components.\footnote{Signals with significant high-frequency content may require a larger $\lambda_{\text{ratio}}$.}

\subsection{Computational Complexity}
We analyze the computational complexity of the proposed method. The eigendecomposition of the Laplacians is performed once offline. Using efficient iterative methods such as the Lanczos algorithm \cite{lanczos1950iteration}, computing the first $k_p$ eigenvectors requires approximately $\mathcal{O}(k_p |\mathcal{E}_p|)$ operations, where $|\mathcal{E}_p|$ is the number of edges in $\mathcal{G}_p$. For simplicity, we assume $k_1 = k_2 = \dots = k_P = k$. During  optimization, each forward pass involves projecting $\mathbf{z}$ onto the truncated basis ($\mathcal{O}(kN)$), scaling by $\boldsymbol{\alpha}_p$ ($\mathcal{O}(k)$), and mapping back to the vertex domain ($\mathcal{O}(kN)$). Across $P$ branches, the per-iteration complexity therefore scales as $\mathcal{O}(PkN)$. Since $k \ll N$, the parameter space is limited to $k$ per branch, resulting in overall memory and time complexity that scale linearly with the number of nodes $N$. By contrast, Eq.~\eqref{eq:convex_reg} is a convex quadratic program over all \(P N\) nodal variables, leading to a much higher-dimensional optimization problem (its per‑iteration cost depends on the chosen solver and can be comparable to the proposed method for first‑order algorithms).

\subsection{Linear Representation and Spectral Interpretation}
The proposed learnable spectral filtering model admits an equivalent linear representation with respect to the trainable spectral coefficients for fixed graph bases, providing insight into the underlying estimation problem. For simplicity, we assume a common truncation level across all graphs, i.e., $k_1 = k_2 = \dots = k_P = k$. For the $p$-th branch, let the fixed latent input be $\mathbf{z} \in \mathbb{R}^N$, and denote its projection onto the truncated graph Fourier basis by $\mathbf{\hat{z}}_p = (\mathbf{U}_p^{(k)})^\top \mathbf{z} \in \mathbb{R}^k$. The component estimate can then be written as $\hat{\mathbf{x}}_p = \mathbf{U}_p^{(k)} \, \mathrm{diag}(\boldsymbol{\alpha}_p)\, \mathbf{\hat{z}}_p$. Using the identity $\mathrm{diag}(\boldsymbol{\alpha}_p)\mathbf{\hat{z}}_p = \mathrm{diag}(\mathbf{\hat{z}}_p)\boldsymbol{\alpha}_p$, this becomes $\hat{\mathbf{x}}_p = \mathbf{U}_p^{(k)} \, \mathrm{diag}(\mathbf{\hat{z}}_p)\boldsymbol{\alpha}_p$. Defining $\mathbf{B}_p = \mathbf{U}_p^{(k)} \, \mathrm{diag}(\mathbf{\hat{z}}_p)$, the estimate simplifies to the linear form $\hat{\mathbf{x}}_p = \mathbf{B}_p \boldsymbol{\alpha}_p$. By stacking all coefficient vectors into $\boldsymbol{\alpha}$ and defining $\mathbf{A} = [\mathbf{B}_1, \dots, \mathbf{B}_P]$, the reconstructed mixture $\hat{\mathbf{m}} = \sum_{p=1}^{P} \mathbf{B}_p \boldsymbol{\alpha}_p$ reduces to the linear system $
\hat{\mathbf{m}} = \mathbf{A} \boldsymbol{\alpha},
$
and the estimation problem becomes:
\begin{equation}
\min_{\boldsymbol{\alpha}} \| \mathbf{m} - \mathbf{A}\boldsymbol{\alpha} \|_2^2.
\end{equation}

This reformulation shows that, for fixed graph bases and input, the proposed method reduces to a linear inverse problem \cite{niresi2025rins} that is convex in the spectral coefficients. Since the matrix $\mathbf{A}$ is constructed from graph-dependent spectral bases and a fixed random excitation, the recovered signals are constrained to lie in a low-dimensional subspace spanned by scaled eigenvectors. From this perspective, the method can be interpreted as a spectral parameterization of the reconstruction problem, rather than a direct optimization over the source signals $\mathbf{x}_p$. In contrast to traditional regularizers such as the graph Sobolev norm \cite{giraldo2022reconstruction} or Dirichlet energy, which impose predefined smoothness priors, the proposed formulation learns the spectral gains $\boldsymbol{\alpha}_p$ directly from the data, enabling adaptive rather than predefined spectral responses for each component. Under a white Gaussian noise model and the assumption that the true sources lie within the truncated spectral subspaces, the least-squares estimate \(\hat{\boldsymbol{\alpha}}\) attains the Cram\'er--Rao lower bound, making it an efficient estimator; a detailed derivation is provided in the supplementary material.

\section{Numerical Results}

We evaluate the proposed learnable spectral filter (GSS‑LSF) against the convex smoothness baseline (GSS‑Smooth \cite{mohammadi2023graph}) and the ratio‑based method (GSS‑Ratio \cite{yarandi2024new}), which promotes smoothness on each source’s own graph while penalizing smoothness on the other graphs. Performance is measured using per-source SNR and average SNR. Hyperparameters are selected via grid search in the first trial and then fixed for all subsequent runs for both methods.\footnote{Cases 1 \& 2 use $\lambda_{\text{ratio}} = 0.1$ to enforce stronger smoothness, whereas Case 3 uses $\lambda_{\text{ratio}} = 0.5$ to allow higher-frequency components; sensitivity analysis is provided in the supplementary material.} All reported results are averaged over three trials.

\subsection{Experiment 1: Two-Source Mixture}

In this experiment, we separate two graph signals from a noisy
mixture using two graph sizes: $N = 250$ and $N = 350$. Two independent graphs are generated: a connected random geometric graph and a random 4-regular graph. The first source signal (S1) is generated as a random linear combination of the first two nontrivial eigenvectors of the Laplacian of the first graph, while the second source (S2) is obtained from the first five nontrivial eigenvectors of the second graph Laplacian. Both signals are normalized to have zero mean and unit variance.  Fig. \ref{fig:case1} summarizes the results across input SNR levels of \{0, 5, 10, 15, 20\} dB. The proposed GSS-LSF method consistently achieves higher output SNR than both GSS-Smooth and GSS-Ratio baselines for both graph sizes across all noise levels. Moreover, as the input SNR increases, GSS-LSF exhibits an approximately linear improvement in performance, effectively recovering both source signals from the noisy mixture. In particular, S1 is generally reconstructed more accurately than S2, due to its lower spectral complexity.

\begin{figure}[h]
    \centering
    \includegraphics[width=\linewidth]{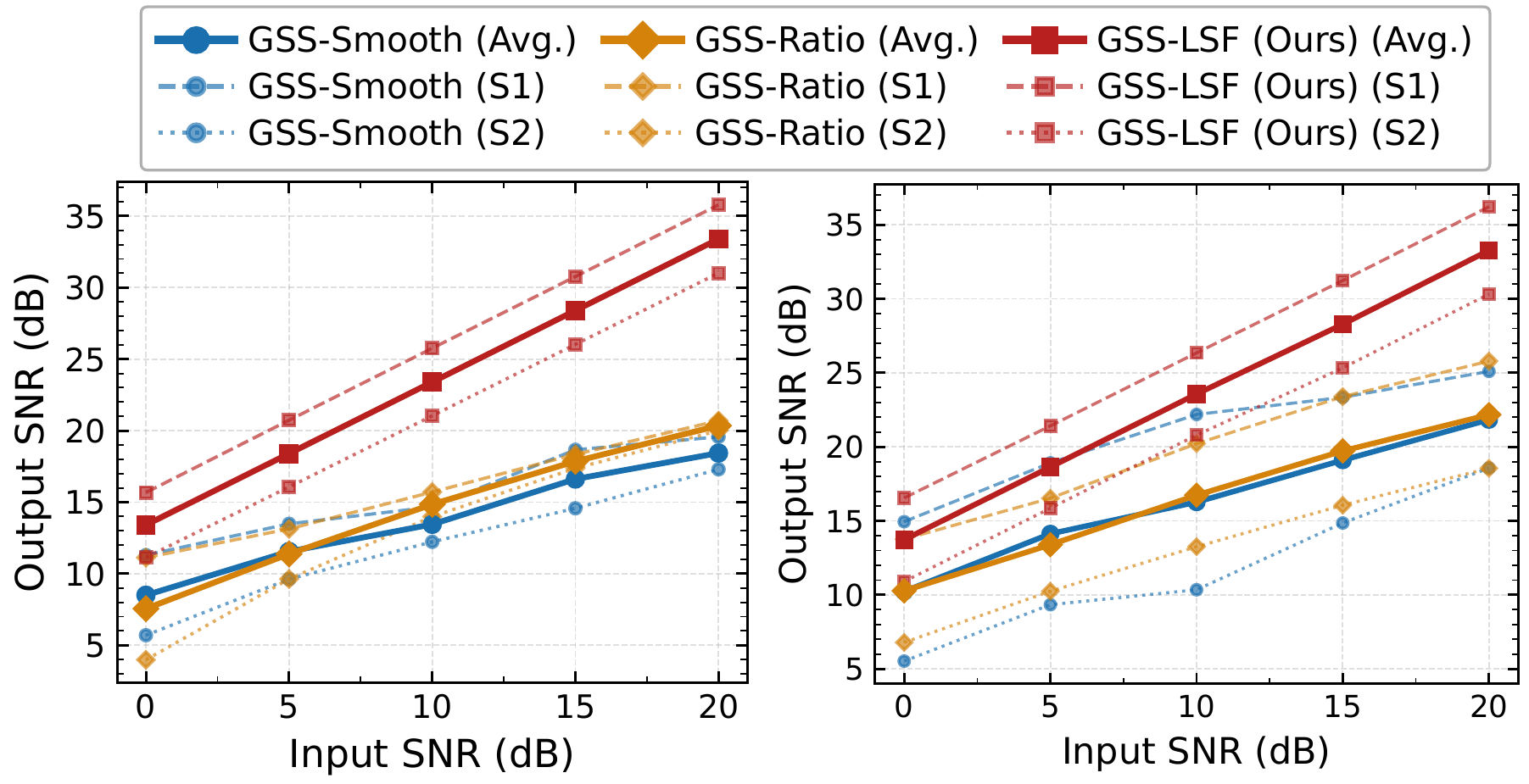}
    \caption{Comparison of output SNR versus input SNR for $N = 250$ (left) and $N = 350$ (right). The plot shows the performance of the proposed  GSS-LSF and other methods. For each method, results are presented as the average (Avg.), as well as individual sources S1 and S2.}

    \label{fig:case1}
\end{figure}

\subsection{Experiment 2: Four-Source Mixture}

In this experiment, we evaluate scalability in a more challenging setting with $P = 4$ mixed graph signals and graph sizes $N = 250$ and $N = 350$. We generate four independent graphs: one connected random geometric graph and three random 4-regular graphs. Each source signal is constructed as a random linear combination of low-frequency Laplacian eigenvectors, using the first $2$, $4$, $6$, and $8$ nontrivial eigenvectors, respectively, resulting in progressively less smooth signals. All signals are normalized, and the observed mixture is obtained by summing the sources and adding Gaussian noise with $\sigma = 0.2$. The results in Table~\ref{tab:case2_4mix} demonstrate that as mixture complexity increases, the performance gap widens. For $N = 250$, the proposed GSS-LSF achieves an average SNR of $25.89$ dB, substantially outperforming GSS-Smooth ($12.47$ dB) and GSS-Ratio ($14.74$ dB). For $N = 350$, the proposed method reaches $25.29$ dB, while GSS-Smooth and GSS-Ratio achieve $14.77$ dB and $15.12$ dB, respectively. While both baselines struggle to separate multiple sources, GSS-LSF maintains high recovery accuracy across all components.

\begin{table}[ht]
\centering
\caption{Performance for four-source mixtures}
\label{tab:case2_4mix}
\resizebox{0.9\linewidth}{!}{%
\begin{tabular}{c c c c c c c}
\toprule
\textbf{Nodes} & \textbf{Method} & \textbf{SNR$_1$} & \textbf{SNR$_2$} & \textbf{SNR$_3$} & \textbf{SNR$_4$} & \textbf{Avg. SNR} \\
\midrule
250 & GSS-Smooth & 16.72 & 11.69 & 9.82  & 11.65 & 12.47 \\
250 & GSS-Ratio  & 16.25 & 14.16 & 13.70 & 14.85 & 14.74 \\
250 & GSS-LSF (Ours)   & \textbf{31.06} & \textbf{22.86} & \textbf{25.68} & \textbf{23.96} & \textbf{25.89} \\
\midrule
350 & GSS-Smooth & 21.47 & 13.09 & 12.47 & 12.04 & 14.77 \\
350 & GSS-Ratio  & 17.07 & 15.68 & 14.76 & 12.96 & 15.12 \\
350 & GSS-LSF (Ours)    & \textbf{29.42} & \textbf{20.32} & \textbf{24.18} & \textbf{27.22} & \textbf{25.29} \\
\bottomrule
\end{tabular}%
}
\end{table}

\subsection{Experiment 3: Smooth (non-Bandlimited) Source Mixture}

We evaluate a two‑source mixture with $N = 250$ and $N = 350$ using two random geometric graphs. Each source is generated as $\mathbf{x}_p^\star = \mathbf{U}_p e^{-\tau \boldsymbol{\Lambda}_p} \mathbf{c}_p$ with $\tau=10$, yielding smooth but non‑bandlimited signals. The mixture is corrupted by Gaussian noise ($\sigma = 0.2$). As shown in Table~\ref{tab:case2_2mix}, the proposed GSS‑LSF consistently outperforms both baselines. For $N=250$, GSS‑LSF attains $26.47$ dB average SNR, compared to $19.29$ dB (GSS‑Smooth) and $19.46$ dB (GSS‑Ratio). For $N=350$, GSS‑LSF reaches $27.60$ dB, while GSS‑Smooth and GSS‑Ratio achieve $20.08$ dB and $20.13$ dB, respectively. The gain exceeds $7$ dB in both cases, confirming the advantage of the learnable spectral filter.

\begin{table}[ht]
\centering
\caption{Performance for two-source (non-Bandlimited) mixtures}
\label{tab:case2_2mix}
\resizebox{0.8\linewidth}{!}{%
\begin{tabular}{c c c c c}
\toprule
\textbf{Nodes} & \textbf{Method} & \textbf{SNR$_1$} & \textbf{SNR$_2$} & \textbf{Avg. SNR} \\
\midrule
250 & GSS-Smooth          & 18.97 & 19.60 & 19.29 \\
250 & GSS-Ratio           & 19.31 & 19.61 & 19.46 \\
250 & GSS-LSF (Ours)    & \textbf{25.40} & \textbf{27.53} & \textbf{26.47} \\
\midrule
350 & GSS-Smooth          & 20.39 & 19.79 & 20.08 \\
350 & GSS-Ratio           & 20.42 & 19.84 & 20.13 \\
350 &  GSS-LSF (Ours)    & \textbf{27.57} & \textbf{27.63} & \textbf{27.60} \\
\bottomrule
\end{tabular}%
}
\end{table}

\subsection{Experiment 4: Real-World Sensor Data Mixture}

The proposed method is evaluated on a real‑world district heating dataset containing 14 temperature and 14 pressure sensors installed on pipes, consumers, and pumps~\cite{niresi2026virtual}. The signals are normalized, mixed, and corrupted by Gaussian noise ($\sigma = 0.2$). Graphs for each modality are constructed by connecting each sensor to its five nearest neighbors based on the Euclidean distances between sensor pairs. Table~\ref{tab:real_data} reports the reconstruction SNRs for all methods. The proposed GSS‑LSF achieves an average SNR of $19.38$~dB, outperforming both GSS‑Smooth ($17.73$~dB) and GSS‑Ratio ($17.49$~dB). Notably, GSS‑Ratio improves the temperature recovery ($18.76$~dB) over GSS‑Smooth, but its pressure separation is less accurate, leading to a lower average.

\begin{table}[h]
\centering
\caption{Performance for real‑world sensor data}
\label{tab:real_data}
\resizebox{0.9\linewidth}{!}{%
\begin{tabular}{l c c c}
\toprule
\textbf{Method} & \textbf{Temperature SNR} & \textbf{Pressure SNR} & \textbf{Avg. SNR} \\
\midrule
GSS-Smooth         & 17.03 & 18.43 & 17.73 \\
GSS-Ratio          & 18.76 & 16.23 & 17.49 \\
GSS‑LSF (Ours)   & \textbf{19.06} & \textbf{19.70} & \textbf{19.38} \\
\bottomrule
\end{tabular}
}
\end{table}

\section{Conclusion}

This letter proposes an unsupervised learnable spectral filtering framework for single-mixture graph signal separation. By utilizing a fixed random input, each source is reconstructed through learnable spectral coefficients applied to a truncated graph Fourier basis, allowing for optimization without paired training data. Our approach effectively learns task-adaptive spectral responses while maintaining graph smoothness, with experimental results confirming its capability for effective source recovery.

\bibliographystyle{IEEEtran}
\bibliography{IEEEabrv,References}

\clearpage

\markboth{Supplementary Materials for Graph Signal Separation with Learnable Spectral Filters}%
{Faghih Niresi \MakeLowercase{\textit{et al.}}: Supplementary Material}

\section*{Identifiability Condition}

\begin{proposition}
\label{prop:ident}
Let $\mathbf{U}_p^{(k_p)}\in \mathbb{R}^{N\times k_p}$ contain the first $k_p$ non-constant eigenvectors of the graph Laplacian $\mathbf{L}_p$.
For a fixed random latent input $\mathbf{z}\in\mathbb{R}^N$, define
\[
\mathbf{B}_p = \mathbf{U}_p^{(k_p)} \operatorname{diag}\!\big((\mathbf{U}_p^{(k_p)})^\top \mathbf{z}\big),\qquad p=1,\dots,P,
\]
and assemble the global dictionary $\mathbf{A}=[\mathbf{B}_1\ \cdots\ \mathbf{B}_P]\in\mathbb{R}^{N\times \sum_p k_p}$.
The spectral coefficient vector $\boldsymbol{\alpha}=[\boldsymbol{\alpha}_1^\top,\dots,\boldsymbol{\alpha}_P^\top]^\top$ is uniquely recoverable from a noiseless mixture $\mathbf{m}=\mathbf{A}\boldsymbol{\alpha}$ if and only if $\operatorname{rank}(\mathbf{A}) = \sum_{p=1}^{P} k_p$.

\end{proposition}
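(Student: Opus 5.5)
The plan is to recognize the statement as the standard fact that a linear map is injective exactly when it has trivial kernel. Here $\mathbf{A}$ has $K:=\sum_{p=1}^{P}k_p$ columns. "Uniquely recoverable from a noiseless mixture" is read as: the map $\boldsymbol{\alpha}\mapsto\mathbf{A}\boldsymbol{\alpha}$ on $\mathbb{R}^{K}$ is injective. Equivalently, $\mathbf{A}\boldsymbol{\alpha}=\mathbf{A}\boldsymbol{\alpha}'$ implies $\boldsymbol{\alpha}=\boldsymbol{\alpha}'$. Given this reading, the proof reduces to linking injectivity to the rank through the rank–nullity theorem, $\operatorname{rank}(\mathbf{A})+\dim\ker(\mathbf{A})=K$.

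For sufficiency, assume $\operatorname{rank}(\mathbf{A})=K$. Rank–nullity then gives $\ker(\mathbf{A})=\{\mathbf{0}\}$. If $\mathbf{A}\boldsymbol{\alpha}=\mathbf{A}\boldsymbol{\alpha}'=\mathbf{m}$, then $\boldsymbol{\alpha}-\boldsymbol{\alpha}'\in\ker(\mathbf{A})$, so $\boldsymbol{\alpha}=\boldsymbol{\alpha}'$. I will also note the explicit recovery formula. Full column rank makes $\mathbf{A}^\top\mathbf{A}$ invertible, so $\boldsymbol{\alpha}=(\mathbf{A}^\top\mathbf{A})^{-1}\mathbf{A}^\top\mathbf{m}$. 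This is the unique minimizer of the least-squares problem in the Linear Representation subsection, and it ties the proposition to the estimator used there.

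For necessity, argue by contraposition. If $\operatorname{rank}(\mathbf{A})<K$, there is a nonzero $\mathbf{v}\in\ker(\mathbf{A})$. Then for every $\boldsymbol{\alpha}$, both $\boldsymbol{\alpha}$ and $\boldsymbol{\alpha}+\mathbf{v}$ yield the same mixture $\mathbf{m}$, so no $\boldsymbol{\alpha}$ is uniquely recoverable. I will add a remark on what this means in terms of the model. Writing $\mathbf{v}=[\mathbf{v}_1^\top,\dots,\mathbf{v}_P^\top]^\top$, two failure modes can occur:
\begin{itemize}
\item[(i)] Within one branch, some entry $(\mathbf{U}_p^{(k_p)})^\top\mathbf{z}$ vanishes. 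Then a column of $\mathbf{B}_p$ is zero, and the corresponding gain in $\boldsymbol{\alpha}_p$ is unidentifiable.
\item[(ii)] Across branches, the truncated subspaces $\operatorname{range}(\mathbf{U}_p^{(k_p)})$ intersect nontrivially. Then energy can be moved between sources without changing the mixture.
\end{itemize}

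The main obstacle is interpretive rather than technical: pinning down what "uniquely recoverable" means. I will fix it as injectivity on all of $\mathbb{R}^K$, i.e.\ uniqueness for every $\boldsymbol{\alpha}$. Under that reading the "only if" direction holds for every $\boldsymbol{\alpha}$ simultaneously, because kernel directions can be added to any of them.

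Optionally, I will remark that the rank condition can be checked in a structured way. Since $\mathbf{B}_p=\mathbf{U}_p^{(k_p)}\mathbf{D}_p$ with $\mathbf{D}_p$ diagonal, $\operatorname{rank}(\mathbf{A})=K$ holds if and only if two things are true. First, every entry of each $\mathbf{D}_p$ is nonzero, which happens almost surely for a continuously distributed $\mathbf{z}$. Second, the concatenation $[\mathbf{U}_1^{(k_1)}\ \cdots\ \mathbf{U}_P^{(k_P)}]$ has full column rank, which is a deterministic condition on the graphs requiring $K\le N$. The "if" direction holds because full column rank of the concatenation, multiplied on the right by an invertible block-diagonal matrix, is preserved. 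The "only if" direction follows from the two failure modes above. This remark is not needed for the proposition itself.
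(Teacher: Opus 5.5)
Your proof is correct and matches the paper's argument: full column rank gives a trivial kernel, so $\mathbf{A}\boldsymbol{\alpha}=\mathbf{A}\boldsymbol{\alpha}'$ forces $\boldsymbol{\alpha}=\boldsymbol{\alpha}'$, while a nonzero kernel vector yields a distinct coefficient vector producing the same mixture. One small caveat in your optional remark: for $P\ge 3$, failure mode (ii) is that the subspaces $\operatorname{range}(\mathbf{U}_p^{(k_p)})$ fail to form a direct sum, which can happen even when every pairwise intersection is trivial, so ``intersect nontrivially'' should be read in that broader sense; your factorization $\mathbf{A}=[\mathbf{U}_1^{(k_1)}\ \cdots\ \mathbf{U}_P^{(k_P)}]\,\mathrm{blkdiag}(\mathbf{D}_1,\dots,\mathbf{D}_P)$ already gives the correct characterization.
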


\begin{proof}
If two coefficient vectors $\boldsymbol{\alpha}$ and $\boldsymbol{\alpha}'$ satisfy $\mathbf{A}\boldsymbol{\alpha}=\mathbf{A}\boldsymbol{\alpha}'$, then $\mathbf{A}(\boldsymbol{\alpha}-\boldsymbol{\alpha}')=\mathbf{0}$.
When $\mathbf{A}$ has full column rank, its nullspace is trivial, forcing $\boldsymbol{\alpha}=\boldsymbol{\alpha}'$.
Conversely, if $\operatorname{rank}(\mathbf{A}) < \sum_p k_p$, there exists a non-zero $\Delta\boldsymbol{\alpha}\in\ker(\mathbf{A})$ (where $\ker(\mathbf{A})$ denotes the nullspace of $\mathbf{A}$), yielding a distinct coefficient vector that produces the same mixture, so identifiability is lost.
Because $\hat{\mathbf{x}}_p = \mathbf{B}_p\boldsymbol{\alpha}_p$, uniqueness of $\boldsymbol{\alpha}$ directly implies uniqueness of the estimated source components.
\end{proof}

\begin{remark}
When the latent input $\mathbf{z}$ is drawn from a continuous distribution (e.g., i.i.d.\ Gaussian) and the truncated eigenspaces $\operatorname{span}(\mathbf{U}_p^{(k_p)})$ are sufficiently distinct (e.g., no shared low‑frequency eigenvectors), the matrix $\mathbf{A}$ attains full column rank.
\end{remark}
\vspace{-1em}
\begin{remark}
Excluding the constant eigenvector (associated with zero eigenvalue) from every $\mathbf{U}_p^{(k_p)}$ is essential since the all-ones vector would belong to all spatial bases, immediately creating linear dependence and reducing the rank of $\mathbf{A}$.
\end{remark}
\vspace{-1em}
\section*{Cram\'er--Rao Bound}

For simplicity, we assume that all sources share the same truncation order, i.e., $k_p = k$ for $p = 1,\dots,P$, so that each $\mathbf{B}_p\in\mathbb{R}^{N\times k}$ and $\mathbf{A}\in\mathbb{R}^{N\times Pk}$. Assume that $\mathbf A$ has full column rank, so that $Pk \le N$ and $\mathbf A^\top\mathbf A$ is invertible. The additive noise is white Gaussian, and the log-likelihood function (up to additive constants, denoted by $\mathrm{const}$) is
\begin{equation}
\ell(\boldsymbol{\alpha}) = -\frac{1}{2\sigma^2}\|\mathbf m - \mathbf A\boldsymbol{\alpha}\|_2^2 + \mathrm{const}.
\end{equation}
The Fisher information matrix is
\begin{equation}
\mathbf{F} = \frac{1}{\sigma^2}\mathbf A^\top\mathbf A.
\end{equation}
Hence, the Cram\'er--Rao bound for any unbiased estimator of $\boldsymbol{\alpha}$ is
\begin{equation}
\operatorname{Cov}(\hat{\boldsymbol{\alpha}}) \succeq \mathbf{F}^{-1} = \sigma^2(\mathbf A^\top\mathbf A)^{-1},
\end{equation}
where $\operatorname{Cov}(\cdot)$ denotes the covariance matrix.

The GSS-LSF estimator is the least-squares estimator
\begin{equation}
\hat{\boldsymbol{\alpha}} = (\mathbf A^\top\mathbf A)^{-1}\mathbf A^\top\mathbf m,
\end{equation}
which is unbiased and satisfies
\begin{equation}
\operatorname{Cov}(\hat{\boldsymbol{\alpha}}) = \sigma^2(\mathbf A^\top\mathbf A)^{-1}.
\end{equation}
Therefore, GSS-LSF attains the Cram\'er--Rao bound and is an efficient minimum-variance unbiased estimator.

Let $[\mathbf{F}^{-1}]_{pp}$ denote the $k\times k$ principal block of the inverse Fisher information matrix corresponding to source $p$, so that
\[
[\mathbf{F}^{-1}]_{pp} = \sigma^2[(\mathbf A^\top\mathbf A)^{-1}]_{pp}.
\]
Since $\hat{\mathbf x}_p = \mathbf B_p\hat{\boldsymbol{\alpha}}_p$,
\begin{equation}
\operatorname{Cov}(\hat{\mathbf x}_p) = \mathbf B_p[\mathbf{F}^{-1}]_{pp}\mathbf B_p^\top.
\end{equation}
Consequently,
\begin{equation}
\mathbb{E}\!\left[ \|\hat{\mathbf x}_p - \mathbf x_p^\star\|_2^2 \right]
= \operatorname{Tr}\!\left( \mathbf B_p[\mathbf{F}^{-1}]_{pp}\mathbf B_p^\top \right),
\end{equation}
for the GSS-LSF estimator, where $\mathbb{E}[\cdot]$ denotes expectation and $\operatorname{Tr}(\cdot)$ is the trace operator. The same quantity serves as a lower bound for any unbiased estimator. The bound depends on the latent input $\mathbf z$ through $\mathbf A$. When $\mathbf z$ is drawn from a continuous distribution (e.g., i.i.d.\ Gaussian), rank-deficient realizations occur with probability zero under mild conditions, thereby promoting identifiability of the spectral coefficients. If the true sources are not exactly contained in the truncated graph spectral subspaces, the model becomes misspecified and the estimator is generally biased.

\section*{Sensitivity to Hyperparameter}

We evaluate the sensitivity of GSS-LSF to the eigenvalue cutoff ratio $\lambda_{\text{ratio}}$, which controls the fraction of the low‑frequency spectrum retained for each source.
Figure~\ref{fig:ablation_smooth} reports the reconstruction SNR for both sources and their average.
The method remains robust over a wide range \(\lambda_{\text{ratio}} \in [0.1,\,0.7]\); the average SNR varies only between \(27.2\)\,dB and \(29.8\)\,dB, indicating that GSS-LSF is not overly sensitive to the precise choice of the cutoff.
As long as the essential low‑frequency support of the smooth signals is captured, the separation is reliable. When $\lambda_{\text{ratio}}$ is increased to \(0.8\), however, the average SNR drops sharply to \(19.7\)\,dB.
At such large ratios, the eigenbasis begins to include high‑frequency components that are dominated by noise.
Consequently, the unmixing process starts to fit the observation noise alongside each source, making it increasingly difficult to correctly identify the individual components.
This behavior confirms that for smooth signals, the source identification degrades as \(\lambda_{\text{ratio}}\) grows beyond the necessary low‑frequency band, because the separation algorithm inadvertently models noise rather than the underlying signal structure.
The stability up to \(\lambda_{\text{ratio}} = 0.7\) demonstrates that GSS-LSF can be deployed without meticulous hyperparameter tuning.

\begin{figure}
    \centering
    \includegraphics[width=0.7\linewidth]{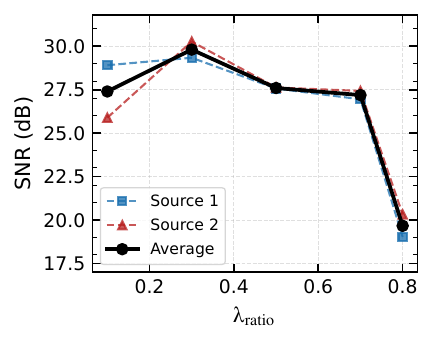}
    \caption{
        Ablation study of GSS-LSF on smooth and non-bandlimited signals.
    }
    \label{fig:ablation_smooth}
\end{figure}

\end{document}